\theoremstyle{plain}
\newtheorem{theorem}{Theorem}
\newtheorem{lemma}{Lemma}
\theoremstyle{definition}
\newtheorem{definition}{Definition}
\newtheorem{construction}{Construction}
\newtheorem{example}{Example}
\newcolumntype{C}{>{\centering\arraybackslash}p{6em}}
\def\C{\mathcal{C}}
\def\rk{\text{rank}}
\def\F{\mathbb{F}}
\def\te{\tilde{e}}
\def\a{\alpha}
\def\g{\gamma}
\begin{document}
\title{Irregular Recovery and Unequal Locality for Locally Recoverable Codes with Availability}
\author{Sourbh~Bhadane~and~Andrew~Thangaraj,~\IEEEmembership{Senior~Member,~IEEE,}\\
\thanks{S. Bhadane and A. Thangaraj are with the Department of Electrical Engineering,
Indian Institute of Technology Madras, Chennai 60036, India. Email: sourbh.bhadane,andrew@ee.iitm.ac.in.}}

\maketitle

\begin{abstract}
A code is said to be a Locally Recoverable Code (LRC) with availability if every coordinate can be recovered from multiple disjoint sets of other coordinates called recovering sets. The vector of sizes of recovering sets of a coordinate is called its recovery profile. In this work, we consider LRCs with availability under two different settings: (1) \emph{irregular recovery}: non-constant recovery profile that remains fixed for all coordinates, (2) \emph{unequal locality}: regular recovery profile that can vary with coordinates. For each setting, we derive bounds for the minimum distance that generalize previously known bounds to the cases of irregular or varying recovery profiles. For the case of regular and fixed recovery profile, we show that a specific Tamo-Barg polynomial-evaluation construction is optimal for all-symbol locality, and we provide parity-check matrix constructions for information locality with availability. 
\end{abstract}

\section{Introduction}
Modern distributed storage systems that store a large amount of data are prone to node failures. Replication has been a popular and traditional method for protecting against failures and providing reliability. Recently, instead of replication of data, erasure codes have been employed to reduce storage overhead, while maintaining the same level of reliability. For example, Facebook uses a $\left(14,10 \right)$ Reed-Solomon (RS) code instead of replication that can recover from as many as 4 node failures. However, an RS decoder needs to read from 10 other nodes for the recovery of even a single node failure. Since a single node failure is the most frequent and since reading from fewer nodes for recovery is desirable, researchers have proposed Locally Recoverable Codes (LRCs) \cite{gopalan2012locality}. LRCs were originally intended to minimize the number of nodes accessed to recover from a single node failure. Although single node failures are most frequent, LRCs with  multiple disjoint \textit{recovering sets} are useful for recovering from multiple concurrent node failures. Moreover, this property could also be exploited for the storage of ``hot'' data, which may be served to several users simultaneously using the recovering sets in parallel. As a result, LRCs with multiple disjoint recovering sets are also referred to as LRCs with \textit{availability}.  \par
A coordinate is said to have locality $r$ if it can be recovered by accessing at most $r$ other coordinates. An LRC is said to have \textit{information locality} $r$ if all information coordinates have locality $r$. If all coordinates have locality $r$, an LRC is said to have \textit{all-symbol locality}. LRCs were first introduced in the seminal paper \cite{gopalan2012locality} and a Singleton-like upper bound on the minimum distance was derived. Constructions meeting this bound with exponential field size were proposed in \cite{6620541}, \cite{6620540}, \cite{DBLP:journals/corr/HaoX16}. A parity-check matrix approach was used in \cite{DBLP:journals/corr/HaoX16} to construct optimal LRC codes. An elegant algebraic optimal construction with field size linear in blocklength was presented in \cite{tamo2014family}. LRCs with availability have been studied in \cite{tamo2014family,6620355,rawat2014locality,wang2014repair,WangZL15,anyu2014combi,tamo2016bounds,huang2015linear}. In \cite{6620355}, LRCs with availability were first introduced and explicit constructions using partial geometries were presented. The algebraic construction in \cite{tamo2014family} includes extensions to LRCs with availability. In \cite{anyu2014combi}, repair groups were related to a combinatorial concept of regenerating sets and minimum distance bounds for square codes were developed. In \cite{rawat2014locality}, minimum distance upper bounds were derived and optimal constructions using Gabidulin codes were presented for a weaker notion of LRCs with availability. High-rate constructions of binary LRCs with availability using block designs were presented in \cite{WangZL15}. A field-size dependent distance upper bound for linear LRCs with information locality and availability was derived in \cite{huang2015linear} and a tensor product based code was constructed to achieve optimality in some cases. The following upper bound on the minimum distance for an $\left[ n,k,d \right] $ LRC with information locality $r$ and availability $t$ is given in \cite{wang2014repair}:
\begin{equation} \label{infobd}
d \leq n - k - \left \lceil \frac{t \left(k-1 \right) +1}{t \left( r -1 \right) +1} \right \rceil + 2.
\end{equation}
For $n \geq k \left(tr+1 \right)$, \cite{wang2014repair} proves existence of codes that meet the above bound. However, to the best of our knowledge, no explicit constructions meeting this bound are known. In \cite{tamo2016bounds}, the following upper bound on minimum distance was derived for the all-symbol locality and availability case:
\begin{equation} \label{tbbound}
d \leq n- \sum\limits_{i=0}^{t} \left \lfloor \frac{ k-1}{r^i} \right \rfloor.
\end{equation}
To the best of our knowledge, no general constructions that attain the above bound are known. Recently, \cite{7541336}, \cite{7541377} studied codes with unequal locality. Upper bounds on minimum distance for codes with unequal information locality and unequal all-symbol locality were obtained \cite{7541336}. Constructions based on an adaptation of Pyramid codes and rank-metric codes were proposed to attain these bounds, respectively. However, \cite{7541336}, \cite{7541377} did not consider LRCs with availability.\par
In an LRC with availability, the sizes of recovering sets of a particular coordinate is called its recovery profile, which is said to be regular if all sizes are equal, and irregular otherwise. In this work, we extend codes with availability to include irregular and varying recovery profiles. Specifically, we study the following two settings, which do not appear to have been studied in any of the mentioned prior work:
\begin {enumerate} [label=\itshape\alph*\upshape)]
\item \emph{Irregular recovery}: recovery profile can have varying recovering set sizes but remains fixed for all coordinates, i.e. $t$ disjoint recovering sets with sizes $r_1$, $r_2$, $\ldots$, $r_t$ for all coordinates, 
\item \emph{Unequal locality}: regular recovery profile that may vary over coordinates, i.e. $t$ disjoint recovering sets each of size $r_i$ for coordinate $i$, $1\le i\le n$.
\end {enumerate} 
Upper bounds on minimum distance are obtained for both settings under information and/or all-symbol locality. We also present a generalization of an existing construction from \cite{tamo2014family} and prove that it meets \eqref{tbbound} for arbitrary $t$ and $r=k-1$. For information locality and availability, we extend the construction in \cite{{DBLP:journals/corr/HaoX16}} to include availability and provide an explicit parity-check matrix construction that meets \eqref{infobd} for $n \geq k \left(tr+1 \right)$. \thanks{An earlier version of this work was presented partly in the National Conference on Communications, IIT Madras, Mar 2017.}

\section{Preliminaries}
Consider an $\left[ n,k,d \right]$ linear code $\mathcal{C} \subseteq \mathbb{F}_{q}^{n}$, where $q$ is a prime power and $\F_q$ is the finite field with $q$ elements. Suppose a subset $D\subseteq \left[ n \right] \triangleq \{1,2,\ldots,n\}$ is the support of a dual codeword. Then, for every $i\in D$, the $i$-th coordinate of a codeword of $\C$ is a linear combination of the coordinates in $D\setminus\{i\}$. The code $\C$ is said to have locality $r$ and availability $t$ if, for $i \in [n]$, there exist $t$ dual-codeword support sets $D_j^{\left( i \right)}$, $j \in \left[ t \right]$ such that (1) $i\in\left|D_j^{(i)}\right|$, (2) $R_j^{(i)}=D_j^{(i)}\setminus\{i\}$ are disjoint and (3) $\left|R_j^{(i)}\right|\le r$. The sets $R_j^{(i)}$ are called the \emph{recovering sets} for $i$ because the coordinate $i$ can be recovered from the coordinates in any of its recovering sets. For $i\in[n]$, we denote $\Gamma_{a} \left( i \right) = \left\lbrace i \right\rbrace \cup R_{1}^{\left( i \right)} \cup R_{2}^{\left( i \right)} \cdots \cup R_{a}^{\left( i \right) }$ for $1 \leq a \leq t $. A code with locality is referred to as a Locally Recoverable Code (LRC). 

\subsection{Minimum distance bound and algorithm}
For $S\subseteq [n]$, let $\C_S$ denote the code $\C$ restricted to the positions in $S$, and let $\rk(S)$ denote the dimension of $\C_S$. A useful bound on minimum distance of $\C$ is the following: if $\rk(S)<k$, then $d\le n-|S|$. For LRCs, Algorithm \ref{alg:1} is typically used in proofs of minimum distance bounds to find a set $S$ for which $\rk(S)<k$ \cite{gopalan2012locality}\cite{wang2014repair}. 
 
\begin{algorithm}[htb]
\caption{\text{Construct $S$ such that $\text{rank} \left( S \right) = k-1$ }}
\label{alg:1}
\begin{algorithmic}[1]
\State Set  $S_{0}=\phi$, $i = 0$
\While {$\text{rank} \left( S_i \right) \leq k-2}$
\State Set $i=i+1$, Choose $j \in [n] \setminus S_{i-1}$
\If {$\text{rank} \left( S_{i-1} \cup \Gamma_{t} \left( j \right) \right) < k $} 
\State Set $S_{i} = S_{i-1} \cup \Gamma_{t}  \left( j \right) $ 
\Else 
\State Choose $a$ s. t. $\text{rank} \left( S_{i-1} \cup \Gamma_{a+1} \left( j \right) \right) = k$ and
\State $R \subseteq R^{\left(j \right)}_{a+1}$ s. t. $\text{rank} \left( S_{i-1} \cup \Gamma_{a} \left( j \right) \cup R \right) = k-1$
\State Set $S_i = S_{i-1} \cup \Gamma_{a} \left( j \right) \cup R $
\EndIf	
\EndWhile
\State Return $S=S_i$
\end{algorithmic}
\end{algorithm}

\subsection{Tamo-Barg polynomial-evaluation construction}
\label{sec:polyn-eval-constr}
We describe the polynomial-evaluation construction of LRCs with availability from \cite{tamo2014family}.

Let $A\subseteq \mathbb{F}$ ($\F$ is a finite field), $|A|=n$. Let $\mathcal{A}_1$ and $\mathcal{A}_2$ be two partitions of $A$ such that for any two sets $A_1 \in \mathcal{A}_1$, $A_2 \in \mathcal{A}_2$, we have $|A_1|=r_1$, $|A_2|=r_2$, and the size of their intersection $|A_1 \cap A_2| \leq 1$. Such partitions are called orthogonal partitions. 
Define for $\mathcal{A}_i$, $i=1,2$,
\begin{align*}
\mathbb{F}_{\mathcal{A}_i}[x] &= \left \{ f \in \mathbb{F}[x] : \right. \\
 f & \left. \text{ is constant on } A_i \in \mathcal{A}_i, \text{deg }f \leq |A| \right \}.
\end{align*}
Further, define two families of polynomials
\[ \mathcal{F}_{\mathcal{A}_1}^{r_1} = \oplus_{i=0}^{r_1-1} \mathbb{F}_{\mathcal{A}_1} [x] x^i, \hspace{5mm}
\mathcal{F}_{\mathcal{A}_2}^{r_2} = \oplus_{i=0}^{r_2-1} \mathbb{F}_{\mathcal{A}_2} [x] x^i. \]
Consider a polynomial $f$ that belongs to the intersection of $\mathcal{F}_{\mathcal{A}_1}^{r_1}$ and $\mathcal{F}_{\mathcal{A}_2}^{r_2}$. A codeword of a length-$n$ LRC with availability $t=2$ is obtained by evaluating $f$ on all $n$ points of $A$. If the number of such polynomials of degree at most $m$ is $|\mathbb{F}|^k$, we obtain an $(n,k,d)$ availability-2 LRC with minimum distance $d\ge n-m$. A set $A_i\in\mathcal{A}_i$ is a dual codeword support set because the $r_i$ points of $A_i$ pass through a polynomial of degree at most $r_i-1$.

Following \cite{tamo2014family}, a partition is naturally formed by a subgroup $H$ of the multiplicative or additive group of $\mathbb{F}$ and cosets of $H$. A degree-$|H|$ polynomial constant on such partitions is 
\[ g \left( x \right) = \prod_{h \in H} \left( x-h \right). \] 
Such a polynomial is called the annihilator polynomial of $H$.
If $H$ is a multiplicative subgroup of $\mathbb{F}_q^{*}$, then $g \left( x \right) = x^{|H|}$ is constant on each coset of $H$.
\section{Irregular Recovery with Availability}
\label{section:3}
In this section, we consider locally recoverable codes (LRCs) whose coordinates have an irregular recovery profile. This extends the notion of $\left( r,t \right) $ locality in 
 \cite{wang2014repair} to the case where sizes of recovering sets of each coordinate are not equal. A precise definition is as follows.

\begin{definition}
Let $\mathcal{C} \subseteq \mathbb{F}_{q}^{n}$ be an $\left[ n,k,d \right] $ code. The $i$-th coordinate has $\left( \bm{r},t \right) $ locality, where $\bm{r} = \left( r_{1}, r_{2} \dots r_{t} \right)$, if there are $t$ disjoint recovering sets $R_{1}^{\left( i\right) }, R_{2}^{\left( i \right) } \dots R_{t}^{\left( i \right)}$ for $i$ such that 
\[ \left|R_{j}^{\left( i \right)}\right| \le r_{j} \hspace{5mm} \forall j \in \left[ t \right]. \]
The code $\C$ has $(\bm{r},t)$ information locality if all information coordinates have $(\bm{r},t)$ locality. The code $\C$ has $(\bm{r},t)$ all-symbol locality if all coordinates have $(\bm{r},t)$ locality. 
\end{definition}

\subsection{Information locality}
First we consider bounds on minimum distance of codes with $(\bm{r},t)$ information locality. We follow a similar proof technique as \cite{gopalan2012locality} \cite{wang2014repair} but with adaptations for unequal recovery.
\begin{theorem} \label{th1}
If $\mathcal{C}$ has $\left( \bm{r}, t \right)$ information locality, then
$$ d \leq n - k - \left \lceil{\frac{t(k-1)+1}{ \sum_{j=1}^{t} \left( r_{j} - 1\right) +1 }} \right \rceil +2 $$
\end{theorem}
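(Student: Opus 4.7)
The plan is to follow the Gopalan--Wang--Zhang style argument of \cite{gopalan2012locality,wang2014repair}, adapting each step to accommodate the irregular recovery profile. I would run Algorithm \ref{alg:1} on $\mathcal{C}$, restricting the chosen coordinate $j$ in each iteration to an information coordinate, and produce $S\subseteq[n]$ with $\text{rank}(S)=k-1$. The standard bound $d\le n-|S|$ then reduces the theorem to showing
\[ |S| \;\ge\; k + \left\lceil \frac{t(k-1)+1}{A} \right\rceil - 2, \quad A := 1+\sum_{\ell=1}^t (r_\ell-1). \]

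The main new ingredient is the local rank estimate $\text{rank}(\Gamma_t(j))\le A$ for every information coordinate $j$. This follows because each pair $\left(\{j\},R_\ell^{(j)}\right)$ supports a dual codeword of weight at most $r_\ell+1$, and the $t$ such dual codewords are linearly independent: any nontrivial linear combination would have to vanish on each of the pairwise disjoint sets $R_\ell^{(j)}$, forcing all coefficients to be zero. This replaces the regular-case estimate $\text{rank}(\Gamma_t(j))\le 1+t(r-1)$ used in \cite{wang2014repair}. Consequently, every ``if'' iteration of Algorithm \ref{alg:1} adds at most $A$ to $\text{rank}(S)$ and at most $A+t=1+\sum_\ell r_\ell$ to $|S|$.

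I would then carry out the standard bookkeeping of \cite{wang2014repair} with $A$ in place of $t(r-1)+1$. Splitting the $\tau$ iterations into ``if'' steps and a terminating (possibly trivial) ``else'' step and tracking $\text{rank}(S_i)$ and $|S_i|$ in tandem first lower-bounds $\tau$ (since the rank must climb to $k-1$ while growing by at most $A$ per step), and then lower-bounds $|S_\tau|$ (since every ``if'' iteration introduces fresh dual codewords supported on $S_i$ but not on $S_{i-1}$ on account of the freshly added coordinate $j\notin S_{i-1}$ lying in their supports). Matching the resulting expression against $\lceil(t(k-1)+1)/A\rceil-1$ is a short ceiling-arithmetic exercise.

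The chief obstacle, and the main deviation from the regular case, is the final ``else'' iteration, whose rank and size contributions depend on the value of $a$ at which the algorithm exits and therefore on which of the unequal sizes $r_1,\ldots,r_t$ happen to appear in $\Gamma_a(j)$ versus $R_{a+1}^{(j)}$. A case analysis paralleling \cite{wang2014repair} --- but with every occurrence of $t(r-1)+1$ replaced by $1+\sum_\ell(r_\ell-1)$ --- accounts for this final partial step; it is precisely this partial step that is responsible for the $+1$ in the numerator $t(k-1)+1$.
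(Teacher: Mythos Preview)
Your overall plan matches the paper's proof: run Algorithm~\ref{alg:1}, use $\text{rank}(\Gamma_t(j))\le A=1+\sum_\ell(r_\ell-1)$ in place of $1+t(r-1)$, split into the ``if'' and ``else'' termination cases, and do ceiling arithmetic. Case~1 goes through exactly as you indicate.

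There is a genuine gap in your treatment of Case~2. You recognize that the outcome depends on \emph{which} of the unequal sizes $r_1,\ldots,r_t$ sit inside $\Gamma_a(j)$, but then assert that ``a case analysis paralleling \cite{wang2014repair}~--- with every occurrence of $t(r-1)+1$ replaced by $1+\sum_\ell(r_\ell-1)$~--- accounts for this final partial step.'' That substitution is not enough. Concretely, the else-branch yields
\[
|S|\;\ge\; k-1+a+t\,\frac{k-1-\sum_{j=1}^{a+1}(r_j-1)}{A},
\]
and to push this up to $k-2+\lceil(t(k-1)+1)/A\rceil$ you need
\[
\frac{\sum_{j=1}^{a+1}(r_j-1)}{a+1}\;\le\;\frac{A-1}{t},
\]
i.e.\ the first $a{+}1$ recovering sets must have below-average size. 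In the regular case this is an identity, which is why the Wang et al.\ argument needs nothing extra; in the irregular case it can fail outright (e.g.\ $t=2$, $r_1=10$, $r_2=2$, $a=0$). The paper fixes this by first assuming WLOG $r_1\le r_2\le\cdots\le r_t$, so that Algorithm~\ref{alg:1} exhausts the small recovering sets first, and then invoking the averaging inequality above. That ordering-plus-averaging step is the one idea your proposal is missing; once you add it, the rest of your plan goes through and coincides with the paper's proof.
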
 
\begin{IEEEproof}
In  the proof, we will assume that $r_1\le r_2\le\cdots\le r_t$. Let $l$ be the number of iterations in Algorithm 1 when run on the code $\C$ resulting in a subset $S$. 
Denote the rank increment and size increment in the $i$-th iteration of Algorithm 1 by $m_{i}=\rk(S_i)-\rk(S_{i-1})$ and $s_{i}=|S_i|-|S_{i-1}|$, respectively. We find a lower bound for $|S|$, which in turn leads to an upper bound on the distance $d\le n-|S|$.   
Consider two cases depending on how Algorithm 1 terminates.

\noindent \emph{Case 1}: $S_l$ is a union of $\Gamma_{t} \left( j \right)$'s i.e, $S_{l}$ is formed in line 5. 

Since each of the recovering sets contribute at least one linear dependency to $S_{i}$,   
 we have $ m_{i} \leq s_{i} -t$ for $i \in \left[ l \right]$. Further, we have  
 \begin{align}\label{expS}
 |S| = \sum\limits_{i=1}^{l} s_{i} \geq \sum\limits_{i=1}^{l} \left( m_{i} + t \right)= k-1 + tl.
\end{align}
To find a lower bound on $|S|$, we find a lower bound on $l$, the number of iterations. Since every recovering set adds at least 1 linear equation, we have 
$$ \text{rank} \left( \Gamma _{a} \left( j \right) \right) \leq 1 + \sum\limits_{j'=1}^{a} \left( r_{j'} - 1 \right). $$
Since $S$ is the union of $l$ sets $\Gamma_t(j)$, we have
\begin{align}\label{eq:llb} 
\text{rank} \left( S \right) = k-1 &\leq l \big( 1 + \sum\limits_{j=1}^{t} \left( r_{j} - 1 \right) \big). 
\end{align}  
Using \eqref{eq:llb} in \eqref{expS} and $d\le n-|S|$, we get
\begin{align}
d &\leq n - \left( k-1 + t \left \lceil {\frac{k-1}{1+\sum_{j=1}^{t} \left( r_j -1 \right)}} \right \rceil \right) \nonumber \\
&\leq n - k - t\left \lceil {\frac{ k-1  }{1+\sum_{j=1}^{t} \left( r_j -1 \right)}} \right \rceil +1\label{eq:6}\\
&\leq n - k - \left \lceil {\frac{t \left( k-1 \right) +1 }{1+\sum_{j=1}^{t} \left( r_j -1 \right)}} \right \rceil +2,\label{eq:7}
\end{align}
where, to get from \eqref{eq:6} to \eqref{eq:7}, we use the facts $\lfloor tx\rfloor\le tx\le t\lceil x\rceil$ for a real number $x$ and $\lfloor\frac{a}{b}\rfloor=\lceil\frac{a+1}{b}\rceil-1$ for positive integers $a$, $b$.

\noindent \emph{Case 2}: $S_{l}$ is formed in line 9. 

Since $m_i\le s_i-t$, $1\le i\le l-1$, and $m_l\le s_l-a$, we have 
\begin{align}\label{expS2}
 |S| = \sum\limits_{i=1}^{l} s_{i} &\geq \sum\limits_{i=1}^{l-1} \left( m_{i} + t \right) + m_l + a \nonumber \\
 &= k-1 + t \left( l-1 \right) + a.
\end{align} 
Since $\text{rank} \left( S_{l-1} \cup \Gamma_{a+1} \left( l \right) \right) = k$, and $S_{l-1}$ is the union of $l-1$ sets $\Gamma_t(j)$, we have
\begin{align}
k \leq (l-1) \left( 1 + \sum\limits_{j=1}^{t} \left( r_{j} - 1 \right) \right)+\left(1+\sum_{j=1}^{a+1}(r_j-1)\right). \label{eq:1}
\end{align}
Using the lower bound for $l-1$ from \eqref{eq:1} in \eqref{expS2}, we get
\begin{align}
|S| &\geq k-1 + t \frac{k-1-\sum_{j=1}^{a+1}(r_j-1)}{1+ \sum_{j=1}^{t} \left( r_j - 1 \right)} + a.\label{eq:4}
\end{align} 
Let $\Omega=1+ \sum_{j=1}^{t} \left( r_j - 1 \right)$. Since the $r_j$ are in increasing order, the average of the first $a+1$ of the $(r_j-1)$ is smaller than the average of all $t$ resulting in the inequality
\begin{equation}
  \label{eq:3}
  \frac{\sum_{j=1}^{a+1}(r_j-1)}{a+1}\le \frac{\Omega-1}{t}.
\end{equation}
Using \eqref{eq:3} in \eqref{eq:4} and simplifying, we get
\begin{align}
  \label{eq:5}
  |S|&\ge k+\frac{t(k-1)+a+1}{\Omega}-2\\
&\ge k+\left\lceil\frac{t(k-1)+1}{\Omega}\right\rceil-2.
\end{align}
Using the above in $d\le n-|S|$, we get the statement of the theorem.
\end{IEEEproof} 
For the case of equal recovery with availability, $\bm{r}$ is a constant vector with $r_i=r$, and the bound of Theorem 1 reduces to \eqref{infobd}.

\subsection{All-symbol locality}
We now derive a minimum distance upper bound for codes with $\left( \bm{r},t \right)$ all-symbol locality. Our proof is similar in outline to\cite{tamo2016bounds}, and for equal recovery the bound reduces to \eqref{tbbound}. We present two lemmas required for the proof of the upper bound.  
For the rest of this section, we assume that $r_1 \leq r_2 \leq \cdots \le r_t$.  We use the notions of recovering graph and expansion ratio from \cite{tamo2016bounds}.
\subsubsection*{Recovering Graph and Expansion Ratio}
The recovering graph of a length-$n$  LRC code with $(\bm{r},t)$ locality has vertex set $\left[ n \right]$ and edges of color $j$ from vertex $i$ to $i'$ if $i' \in R_j^{(i)}$ for $j \in \left[ t \right]$. More generally, a $t$-edge-colored directed graph is said to be an $(\bm{r},t)$ recovering graph if, for $j\in[t]$, every vertex has at least one and at most $r_j$ outgoing color-$j$ edges. The set of vertices with incoming edges of the same color from a given vertex $i$ is said to be a recovery set for $i$. Note that the recovering graph of an LRC code with $(\bm{r},t)$ locality is indeed an $(\bm{r},t)$ recovering graph with $R^{(i)}_j$ being the recovery set of vertex $i$.

Consider an arbitrary subset of vertices $S$ in a recovering graph. Color all the vertices in $S$ in some fixed color, say red. Color every vertex with at least one fully colored recovery set. Continue this procedure until no more vertices can be colored. The final set of colored vertices thus obtained is defined to be the \textit{closure} of $S$, denoted by $\text{Cl} \left( S \right)$. The \textit{expansion ratio} with respect to $S$, denoted $e(S)$, is defined as the ratio $e(S)=|\text{Cl} \left( S \right)| / |S|$. Observe that all vertices in $\text{Cl} \left( S \right)$ can be recovered from vertices in $S$.

\begin{lemma}
 Let $G$ be an $(\bm{r},t)$ recovering graph. For a vertex $v \in G$, there exists a subset of the vertices $S$ such that $v \in \text{Cl} \left( S \right)$ and 
   \begin{equation}
     \label{eq:2}
|S|\leq \prod\limits_{i=1}^{t} r_i\text{ and }e(S)\ge e_{t}\triangleq 1 + \sum\limits_{j=1}^{t} \frac{1}{\prod\limits_{i=1}^{j} r_i}.     
   \end{equation}
\end{lemma}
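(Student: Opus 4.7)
The plan is to proceed by induction on $t$, adapting the argument for the equal-recovery case in \cite{tamo2016bounds} to accommodate distinct $r_j$'s. Throughout, I will use the telescoping identity $e_t = e_{t-1} + 1/\prod_{i=1}^{t} r_i$.

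For the base case $t=1$, take $S = R_1^{(v)}$. Then $|S|\le r_1$, $v \in \text{Cl}(S)$ since $S$ is itself a full color-$1$ recovery set for $v$, and $\text{Cl}(S)\supseteq S\cup\{v\}$ yields $e(S)\ge (|S|+1)/|S|\ge 1+1/r_1 = e_1$.

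For the inductive step, assume the lemma for the truncated vector $(r_1,\ldots,r_{t-1})$. Given $v$ in an $(\bm{r},t)$ recovering graph $G$, consider the subgraph $G'$ obtained by deleting all color-$t$ edges; this is an $((r_1,\ldots,r_{t-1}),t-1)$ recovering graph. For each $u\in R_t^{(v)}$, invoke the inductive hypothesis in $G'$ to obtain a set $S_u$ with $u\in \text{Cl}_{G'}(S_u)$, $|S_u|\le \prod_{i=1}^{t-1} r_i$, and $|\text{Cl}_{G'}(S_u)|\ge e_{t-1}|S_u|$. Take $S = \bigcup_{u\in R_t^{(v)}} S_u$. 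The size bound follows immediately,
\[ |S|\le \sum_u |S_u| \le r_t \prod_{i=1}^{t-1} r_i = \prod_{i=1}^t r_i, \]
and monotonicity of closure gives $\text{Cl}_G(S)\supseteq \text{Cl}_{G'}(S_u)\ni u$ for every $u\in R_t^{(v)}$, so that $R_t^{(v)}\subseteq \text{Cl}_G(S)$ and hence $v\in \text{Cl}_G(S)$.

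For the expansion ratio, the clean case is when the closures $\text{Cl}_{G'}(S_u)$ are pairwise disjoint and $v$ lies outside their union: one obtains
\[ |\text{Cl}_G(S)|\ge 1+\sum_u |\text{Cl}_{G'}(S_u)| \ge 1 + e_{t-1} |S|, \]
whence $e(S)\ge e_{t-1} + 1/|S| \ge e_{t-1} + 1/\prod_{i=1}^t r_i = e_t$ by the telescoping identity. The main obstacle, which I expect to be the hard part, is the general case where the $S_u$'s or their closures overlap. Heuristically, overlap shrinks $|S|$ at least as much as it shrinks $|\text{Cl}_G(S)|$, so the ratio cannot drop below $e_t$; formalizing this should use either a greedy refinement of the choice of $S_u$'s to minimize shared vertices, or a double-counting argument exploiting that every vertex of $G$ belongs to at most $|R_t^{(v)}|\le r_t$ of the closures $\text{Cl}_{G'}(S_u)$.
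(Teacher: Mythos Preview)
Your induction framework matches the paper's, but the gap you flag at the end is genuine and your proposed heuristics do not close it. The claim that ``overlap shrinks $|S|$ at least as much as it shrinks $|\text{Cl}_G(S)|$'' is false in general: the sets $S_u$ can be pairwise disjoint while their closures $\text{Cl}_{G'}(S_u)$ overlap (a vertex recovered from two different $S_u$'s contributes twice to $\sum_u |\text{Cl}_{G'}(S_u)|$ but only once to $|\bigcup_u\text{Cl}_{G'}(S_u)|$, with $|S|=\sum_u|S_u|$ unchanged), so the ratio can genuinely drop below $e_t$. The double-counting suggestion fares no better, since nothing bounds how many of the closures $\text{Cl}_{G'}(S_u)$ a given vertex can lie in.

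The paper closes this gap by building the sets \emph{sequentially} rather than independently. After obtaining $S_1$ for the first neighbor and taking its closure, one passes to the induced subgraph on the complement of that closure; the key observation is that this subgraph is still a valid recovering graph with the same profile bounds, because any vertex outside the closure retains at least one uncolored out-neighbor in every color class (otherwise it would itself have been colored). One then applies the inductive hypothesis to the next neighbor inside this smaller graph (or sets $S_i=\emptyset$ if that neighbor has already been absorbed), and iterates. The closures produced are disjoint by construction, and the inequality you wrote for the ``clean case'' becomes unconditional. A second, smaller issue: in your $G'$ the vertex $v$ is still present and may already lie in some $\text{Cl}_{G'}(S_u)$, invalidating the ``$+1$''; the paper avoids this by deleting $v$ from the reduced graph at the outset and stripping one color class from every remaining vertex (the one pointing to $v$ if there is one, else the smallest). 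As a side remark, the paper expands along the \emph{smallest} recovering set $R_1^{(v)}$ rather than $R_t^{(v)}$; your choice makes the telescoping $e_t=e_{t-1}+1/\prod_i r_i$ more transparent, and either orientation works once the sequential construction is in place.
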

\begin{IEEEproof}
Our proof follows the proof of Lemma 3 in \cite{tamo2016bounds} closely, except we use the following key insight to construct $S$: smaller recovering sets result in larger expansion ratios. Therefore, while constructing $S$ we give a higher preference to smaller recovering sets as compared to larger recovering sets. 

We proceed by induction on $t$. For $t=0$, we take $S$ as the single vertex $v$, and get $e(S)\ge1$. Making the induction hypothesis that the lemma is true for $t=k$, we consider the graph $G$ to be a $(\bm{r},k+1)$-recovering graph. From $G$, an $(\bm{r},k)$-recovering graph $G_1$ is first constructed by removing certain vertices and edges as follows: Remove vertex $v$ from $G$. For every other vertex, $u \ne v$, if there is a color-$j$ edge from $u$ to $v$, remove the edges corresponding to the $j$-th recovering set of $u$. If there is no edge from $u$ to $v$, remove the edges that correspond to the recovering set of $u$ with size $r_{1}$. We remove the $r_1$-sized recovering sets because of the aforementioned principle of giving higher preference to smaller sized recovering sets. It is easy to see that a vertex $u \in G_1$ has recovery profile $\widetilde{\bm{r}}=[\widetilde{r_1}, \widetilde{r_2},\ldots, \widetilde{r}_{k}]$, where $\widetilde{r_i} \leq r_{i+1}$. 

We now briefly outline the construction of the set $S$ as given in \cite{tamo2016bounds}. Let $v_1$, $v_2$,$\ldots$, $v_l$ be the vertices in the $r_{1}$-sized recovering set of $v$ in $G$, where $l \leq r_{1}$. By the induction hypothesis, there exists a subset $S_1\subseteq V(G_1)$ such that $v_1 \in \text{Cl}_1 \left( S_1 \right)$ ($\text{Cl}_i$ denotes closure in $G_i$), and $|S_1|$ and $e \left( S_1 \right)$ satisfy \eqref{eq:2} with recovery profile $\widetilde{\bm{r}}$. For $i=2$ to $l$, let $G_i$ be the induced subgraph defined on $V(G_i) \backslash \text{Cl}_i \left(\cup_{j=1}^{i-1} S_j \right)$. Every vertex in $G_i$ has at least one edge in each of its recovering set, since otherwise it would have been a part of $\text{Cl}_i \left(\cup_{j=1}^{i-1} S_j \right)$. Therefore, $G_i$ is a $(\widetilde{\bm{r}},k) $ recovering graph. If vertex $v_i$ is not in $G_i$, set $S_i = \phi$. If $v_i$ is in $G_i$, by the induction hypothesis, there exists a set $S_i$ in $G_i$ such that $v_i \in \text{Cl}_i(S_i)$ and $|S_i|, e \left(S_i \right)$ satisfy \eqref{eq:2} with recovery profile $\widetilde{\bm{r}}$. Let $S$ be the union of the sets $S_i$ for $1\le i\le l$. 

The upper bound on $|S|$ is immediate. Since $S=\cup_{i=1}^lS_i$ with $|S_i|\le \prod_{j=1}^k\widetilde{r_i}$ and $\widetilde{r_i} \leq r_{i+1}$, we have
\[ |S| \leq \sum\limits_{i=1}^{l} |S_i| \leq r_{1}  \prod\limits_{i=1}^{k} \widetilde{r_i} \leq \prod\limits_{i=1}^{k+1} r_i. \]  
For the lower bound on $e(S)$, note that $\text{Cl}_i(S_i)$, $1\le i\le l$, are disjoint, and that $e(S_i)\ge e_k$. So, we have
\begin{align*}
 e\left(S \right) &= \dfrac{|\text{Cl} \left(S \right)|}{|S|} = \dfrac{1+\sum\limits_{i=1}^{l} |\text{Cl}_i \left(S_i \right)|}{|S|} \\
 &\geq \frac{1}{\prod\limits_{i=1}^{k+1} r_i} + \sum\limits_{i=1}^{l-1} e(S_i) \dfrac{|S_i|}{|S|} \\
 &\geq \frac{1}{\prod\limits_{i=1}^{k+1} r_i} + e_k=e_{k+1}.
\end{align*}
\end{IEEEproof}
The increasing order $r_1\le\cdots\le r_t$ ensures that the largest possible expansion ratio can be obtained using $ \left \lbrace r_1, r_2, \ldots, r_t \right \rbrace$. Note that Lemma 1 reduces to Lemma 3 in \cite{tamo2016bounds} for the case of equal recovery when $r_i= r$.

The radix-$r$ representation of an integer plays a role in \cite{tamo2016bounds}. 
The analog for unequal recovery is the representation of an integer in the unequal radix $\{1,r_1,r_1r_2,\ldots,r_1r_2\cdots r_t\}$. The next lemma concerns such representations.
\begin{lemma}
Let $m$ be an integer with the following representation 
\[ m = \beta r_t \prod\limits_{i=1}^{t} r_i + \sum\limits_{i=1}^{t} \alpha_i \prod\limits_{j=1}^{i} r_j + \alpha_0, \] 
where $0\le\alpha_i< r_{i+1}$, $0\le i\le t-1$, $0\le\alpha_t<r_t$, and $\beta$ is an integer. Let $\te_i=1 + \sum\limits_{j=1}^{i} \frac{1}{\prod\limits_{l=1}^{j}r_l}$. 
Then,
\[ \Biggl \lfloor \frac{m}{\prod\limits_{i=1}^{t} r_i } \Biggr \rfloor e_t \prod_{i=1}^{t} r_i  + \sum\limits_{i=0}^{t-1} \alpha_i \te_i \prod_{j=1}^{i} r_j = \sum\limits_{i=0}^{t} \Biggl \lfloor \frac{m}{\prod\limits_{j=1}^{i} r_j} \Biggr \rfloor \] 
\end{lemma}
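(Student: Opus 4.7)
The plan is to expand each floor $\lfloor m/\prod_{j=1}^{i}r_j\rfloor$ in closed form by reading off quotients from the mixed-radix representation of $m$, then substitute into the right-hand side, swap the order of summation, and recognise the inner sums as $e_t$ and $\te_k$.

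First, I would establish that for every $0\le i\le t$,
\[
\left\lfloor \frac{m}{\prod_{j=1}^{i} r_j} \right\rfloor = \beta r_t \prod_{j=i+1}^{t} r_j + \sum_{k=i}^{t} \alpha_k \prod_{j=i+1}^{k} r_j.
\]
The terms with index $k\ge i$ are divisible by $\prod_{j=1}^{i}r_j$ and contribute directly to the quotient. The remaining ``low-order'' part $\sum_{k=0}^{i-1}\alpha_k\prod_{j=1}^{k}r_j$ is at most $\sum_{k=0}^{i-1}(r_{k+1}-1)\prod_{j=1}^{k}r_j$, which telescopes to $\prod_{j=1}^{i}r_j-1$ and is therefore the correct remainder. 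In particular, $\lfloor m/\prod_{j=1}^{t}r_j\rfloor=\beta r_t+\alpha_t$.

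Next, I would sum this closed form over $i=0,\ldots,t$ and interchange the order of summation, obtaining
\[
\text{RHS}=\beta r_t\sum_{i=0}^{t}\prod_{j=i+1}^{t}r_j+\sum_{k=0}^{t}\alpha_k\sum_{i=0}^{k}\prod_{j=i+1}^{k}r_j.
\]
Dividing and multiplying the first inner sum by $\prod_{j=1}^{t}r_j$ identifies it with $e_t\prod_{j=1}^{t}r_j$ via the definition of $e_t$ in Lemma 1, and the same manipulation on the second inner sum identifies it with $\te_k\prod_{j=1}^{k}r_j$ via the definition of $\te_k$ in the statement. Splitting off the $k=t$ summand and using $\te_t=e_t$ combines it with the $\beta r_t$ term to give $(\beta r_t+\alpha_t)\,e_t\prod_{j=1}^{t}r_j=\lfloor m/\prod_{j=1}^{t}r_j\rfloor\,e_t\prod_{j=1}^{t}r_j$, which is the first term on the LHS; the surviving $k=0,\ldots,t-1$ terms are exactly the sum in the second term of the LHS.

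The main obstacle is purely bookkeeping: the telescoping bound must be set up carefully so that the hypotheses $\alpha_i<r_{i+1}$ (and $\alpha_t<r_t$) line up with the correct modulus $\prod_{j=1}^{i}r_j$, and the double-sum interchange must be done with care so that the $k=t$ summand can be isolated and merged with the $\beta r_t$ term into a single $\lfloor m/\prod_{j=1}^{t}r_j\rfloor$ factor. Once these two steps are written out, the remaining verification is routine algebra.
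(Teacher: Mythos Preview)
Your proposal is correct and follows essentially the same route as the paper: both proofs read off $\lfloor m/\prod_{j=1}^{i}r_j\rfloor$ from the mixed-radix representation, sum over $i$, and identify the resulting inner sums with $e_t\prod_{j=1}^t r_j$ and $\te_k\prod_{j=1}^k r_j$. The only cosmetic difference is that the paper groups $\beta r_t+\alpha_t$ from the outset and finishes by comparing coefficients of $\alpha_i$, whereas you interchange the double sum first and peel off the $k=t$ term at the end; the algebra is the same.
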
	
\begin{IEEEproof}
For $i \in \left[ t \right]$, we have
\begin{align} \label{intermi}
\te_i \prod\limits_{j=1}^{i} r_j &=  \sum\limits_{j=1}^{i+1} \prod\limits_{l=j}^{i} r_l.
\end{align}
Next, observe that 
\[ \sum\limits_{i=0}^{t} \Biggl \lfloor \frac{m}{\prod\limits_{j=1}^{i} r_j} \Biggr \rfloor =  \sum\limits_{i=0}^{t} \left( \beta r_t + \alpha_t \right) \prod\limits_{j=i+1}^{t} r_j + \sum\limits_{j=i}^{t-1} \alpha_{j} \prod\limits_{l=i+1}^{j} r_l. \]
Also, from \eqref{intermi} and the above representation of $m$, 
\[  \Biggl \lfloor \frac{m}{\prod\limits_{i=1}^{t} r_i } \Biggr \rfloor e_t \prod_{i=1}^{t} r_i  = \left( \beta r_t + \alpha_t \right) \sum\limits_{i=0}^{t-1} \prod\limits_{j=i+1}^{t} r_j. \] 
Therefore, it suffices to prove that 
\[ \sum\limits_{i=0}^{t-1} \alpha_i \sum\limits_{j=1}^{i+1} \prod\limits_{l=j}^{i} r_l  =\sum\limits_{i=0}^{t} \sum\limits_{j=i}^{t-1} \alpha_{j} \prod\limits_{l=i+1}^{j} r_l. \]
The above equation can be verified to be true by comparison of coefficients of $\alpha_i$, $i \in \left[ t-1 \right]$, thus completing the proof.
\end{IEEEproof}

\begin{theorem}\label{unifirreg}
If an $\left[ n,k,d \right]$ code $\mathcal{C}$ has $\left( \bm{r} , t\right) $ all-symbol locality, then \[ d \leq n- k+1 - \sum\limits_{i=1}^{t} \left \lfloor \frac{k-1}{\prod_{j=1}^{i} r_j} \right \rfloor. \]   
\end{theorem}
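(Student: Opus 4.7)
The strategy is the standard Singleton-type argument: I aim to exhibit a set $S \subseteq [n]$ with $\rk(S) \leq k-1$ and $|S| \geq (k-1) + \sum_{i=1}^{t} \lfloor (k-1)/\prod_{j=1}^{i} r_j \rfloor$, so that $d \leq n - |S|$ yields the claimed inequality after peeling off the $i=0$ term of the sum.

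I will build $S$ as the closure $\text{Cl}(T)$ of a carefully chosen seed set $T$ of size exactly $k-1$ in the recovering graph of $\mathcal{C}$. Since every coordinate in $\text{Cl}(T)$ is obtained from coordinates of $T$ by iterated use of the recovery equations, one has $\rk(\text{Cl}(T)) \leq \rk(T) \leq |T| = k-1$. The problem then reduces to maximizing $|\text{Cl}(T)|$ for a seed set of size $k-1$.

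The seed $T$ is assembled from pieces produced by Lemma~1 at several ``levels,'' with the bookkeeping supplied by Lemma~2. Writing $m = k-1$ in the mixed radix from Lemma~2 as $m = (\beta r_t + \alpha_t)\prod_{j=1}^{t} r_j + \sum_{i=0}^{t-1} \alpha_i \prod_{j=1}^{i} r_j$, I take $T$ to be a disjoint union of: $\lfloor m/\prod_{j=1}^{t} r_j \rfloor$ seeds of size $\prod_{j=1}^{t} r_j$ with expansion at least $e_t$, each obtained by invoking Lemma~1 on the full $(\bm{r},t)$ recovering graph; for each $i$ from $t-1$ down to $1$, $\alpha_i$ seeds of size $\prod_{j=1}^{i} r_j$ with expansion at least $\te_i$, obtained by invoking Lemma~1 on the reduced $((r_1,\ldots,r_i),i)$ recovering graph (dropping the $t-i$ largest-color recovering sets); and finally $\alpha_0$ single vertices. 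Each new seed is anchored at a vertex lying outside the closures of the previously chosen seeds, so the closures are pairwise disjoint. By construction $|T|=m=k-1$, and the identity of Lemma~2 converts the total closure size exactly into $\sum_{i=0}^{t} \lfloor m/\prod_{j=1}^{i} r_j \rfloor$, which is the quantity needed.

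The main obstacle is justifying that Lemma~1 really can be invoked iteratively: after deleting the closures of previously chosen seeds, the vertex set that remains must still carry a recovering graph of the profile used in the next application. As in \cite{tamo2016bounds} and in the proof of Lemma~1, this is handled by the same deletion argument used there: any vertex outside the current closure must have at least one color-$j$ recovering set that is not already fully colored, so one may prune edges to leave a valid---possibly profile-shrunk---recovering graph on the uncolored vertices. Tracking this shrinkage across the iterations at level $i$ is the only delicate point, but the assumption $r_1 \leq r_2 \leq \cdots \leq r_t$ guarantees that the expansion ratios $\te_i$ and $e_t$ claimed at each step remain achievable, exactly because a shrunken profile $\widetilde{\bm{r}}$ satisfies $\widetilde{r_i} \leq r_{i+1}$ (as already used in the proof of Lemma~1).
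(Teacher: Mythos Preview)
Your proposal is correct and follows essentially the same route as the paper's proof: iteratively apply Lemma~1 on the residual recovering graph---first at the full level~$t$, then at successively lower levels $t-1,\ldots,1$ obtained by discarding the largest-color recovering sets---to assemble a seed set of size $k-1$ whose closure has the required size, and then invoke Lemma~2 to convert the accumulated expansion into $\sum_{i=0}^{t}\lfloor (k-1)/\prod_{j=1}^{i} r_j\rfloor$. The paper likewise sketches only the key ideas and defers the careful bookkeeping to \cite{tamo2016bounds}; your last paragraph correctly identifies and handles the one genuinely delicate point (that the complement of the running closure is again a recovering graph of the needed profile).
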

\begin{IEEEproof} The proof is similar to that of \cite{tamo2016bounds}, and only key ideas are presented. We obtain a $k-1$ sized subset of vertices $S$ and prove a lower bound on $| \text{Cl} \left(S \right)|$ by applying Lemma 1 repeatedly. Consider the recovering graph $G$ of $\C$. From Lemma 1, there exists a set of vertices $S_0$
 such that its expansion ratio is atleast $e_t$. Let the induced subgraph on $V \setminus \text{Cl} \left(S_0 \right) $ be $G_1$, which is an $(\bm{r},t)$ recovering graph. Apply Lemma 1 on $G_1$ and continue this process until the number of remaining vertices in the graph $G_l$ after $l$ steps is lesser than $\prod_{j=1}^{t} r_j$. 

Now, continue by viewing the graph $G_l$ as an $([r_1,\ldots,r_{t-1}],t-1)$ recovering graph. By Lemma 1, there exists a set of vertices $S_l$ with $|S_l| \leq \prod_{j=1}^{t-1} r_j$ and expansion ratio at least $\te_{t-1}$ (see Lemma 2 for definition). Continue the coloring process going through $([r_{1},\ldots,r_i],i)$ recovering graphs containing sets of vertices of size at most $\prod_{j=1}^{i} r_j$ and expansion ratio at least $\te_i$ for $i=t-2,\ldots,1$ till $k-1$ vertices are colored.  

By keeping track of the expansion ratios and the number of applications of Lemma 1, we get the following lower bound on $ | \text{Cl} \left( S \right) |$:
 \[ | \text{Cl} \left( S \right) |  \geq \Biggl \lfloor \frac{k-1}{\prod\limits_{i=1}^{t} r_i } \Biggr \rfloor e_t \prod_{i=1}^{t} r_i  + \sum\limits_{i=0}^{t-1} \alpha_i \te_i \prod_{j=1}^{i} r_j \]
 where $k-1 =  \sum_i \left( \alpha_i \prod\limits_{j=1}^{i} r_j\right)$.
 Using Lemma 2, 
  \[ | \text{Cl} \left( S \right) |  \geq  \sum\limits_{i=0}^{t} \Biggl \lfloor \frac{k-1}{\prod\limits_{j=1}^{i} r_i} \Biggr \rfloor. \]
  Since $\rk(\text{Cl} \left( S \right))=\rk(S)<k$, $d \leq n - | \text{Cl} \left( S \right) |$, which results in the bound of the theorem.
\end{IEEEproof}

\section{Unequal Locality with Availability}
We now consider the case of unequal locality where different coordinates have possibly different, but regular recovery profiles with availability $t$. That is, the $i$-th coordinate has a length-$t$ recovery profile of the form $[r_i\ r_i\ \cdots\ r_i]$. We will consider information locality for this case and prove an upper bound on minimum distance.
\begin{definition}
An $\left[ n,k,d \right]$ code $\mathcal{C}$ has information locality profile $\{k_1,k_2,\ldots,k_r\}$ with availability $t$ if $k_i$ is the number of information coordinates with locality $i$ and availability $t$. 
\end{definition}
A modified version of Algorithm 1, which we refer to as Algorithm 2, is used in the proof. Algorithm 2 is identical to Algorithm 1 except for Step 3, which becomes
$$3:\text{Set }i=i+1,\text{ Choose }j\in [n]\setminus S_{i-1}\text{ with minimal locality}$$
\begin{theorem}
If $\mathcal{C}$ is an $ \left[ n,k,d \right]$ linear code with information locality profile  $ \left \lbrace k_{1}, k_{2}, \ldots, k_{r} \right \rbrace $ with availability $t$, then 
$$ d \leq n - k +2 - t \left( \sum\limits_{j=1}^{r-1} \left \lceil {\frac{k_{j}}{t\left( j-1 \right) +1} }\right \rceil \right) - \left \lceil {\frac{ t \left( k_{r}-1 \right) +1 }{t \left( r-1 \right) +1}} \right \rceil. $$ 
\end{theorem}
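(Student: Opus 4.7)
The plan is to run Algorithm 2 on $\C$, obtain a set $S$ satisfying $\rk(S) = k-1$, and invoke the bound $d \le n - |S|$. Because Algorithm 2 picks a minimum-locality coordinate at each step, the iterations decompose into contiguous blocks $L_1, L_2, \ldots, L_r$, where $L_j$ holds the iterations whose chosen coordinate has locality $j$. Write $n_j = |L_j|$, $l = \sum_j n_j$, and $D_j = 1 + t(j-1)$; in every iteration of $L_j$ one has $|\Gamma_t(\cdot)| \le 1 + tj$ and $\rk(\Gamma_t(\cdot)) \le D_j$. Reproducing the Case 1 / Case 2 decomposition in the proof of Theorem \ref{th1} yields the baseline estimate $|S| \ge (k-1) + t(l-1) + a$, where $a$ is the Case 2 parameter of the terminating iteration (Case 1 termination giving a stronger variant).

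To refine this into the theorem's form, I would extract per-locality bounds on the $n_j$'s from two sources. First, since information coordinates are linearly independent and minimum-locality selection forces every locality-$\le j$ information coordinate into $S$ by the time $L_j$ completes, we obtain the cumulative inequalities $\sum_{j' \le j} n_{j'} D_{j'} \ge k_1 + \cdots + k_j$ for $1 \le j \le r-1$. Second, the Case 2 relation $\rk(S_{l-1} \cup \Gamma_{a+1}(c)) = k$ for the coordinate $c$ chosen in the final iteration, combined with $\rk(\Gamma_{a+1}(c)) \le 1 + (a+1)(r-1)$, supplies the auxiliary constraint $k \le \sum_{j' < r} n_{j'} D_{j'} + (n_r - 1) D_r + 1 + (a+1)(r-1)$. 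An induction on the cumulative constraints, leveraging $D_1 = 1$ together with the elementary ceiling identity $\epsilon + \lceil (K-\epsilon)/D \rceil \ge \lceil K/D \rceil$ for nonnegative integers $\epsilon, K$ and positive integer $D$, would then yield $\sum_{j < r} n_j \ge \sum_{j<r} \lceil k_j / D_j \rceil$; a manipulation of the Case 2 auxiliary constraint analogous to the passage from \eqref{eq:6} to \eqref{eq:7} in Theorem \ref{th1} would deliver $t(n_r - 1) + a \ge \lceil (t(k_r - 1) + 1)/D_r \rceil - 1$.

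Substituting both inequalities into $|S| \ge (k-1) + t(l-1) + a$ with $l - 1 = \sum_{j < r} n_j + (n_r - 1)$ produces $|S| \ge k - 2 + t \sum_{j=1}^{r-1} \lceil k_j/D_j \rceil + \lceil (t(k_r - 1) + 1)/D_r \rceil$, from which the theorem follows via $d \le n - |S|$. The main obstacle is the induction for $\sum_{j < r} n_j \ge \sum_{j<r} \lceil k_j/D_j \rceil$: an information coordinate of locality $j$ can be absorbed during an earlier block $L_{j'}$ with $j' < j$ (by appearing in a recovering set of a locality-$j'$ coordinate), which inflates the rank accumulated before $L_j$ starts and could, in principle, drop $n_j$ below $\lceil k_j/D_j \rceil$. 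The induction must therefore show that any such early absorption forces some earlier $n_{j'}$ to grow by a compensating amount, so that the sum $\sum_{j < r} n_j$ never falls below $\sum_{j<r}\lceil k_j/D_j \rceil$; the elementary ceiling identity above, together with $D_1 = 1$, is precisely what makes this inductive step close, after which the rest of the argument parallels the proof of Theorem \ref{th1}.
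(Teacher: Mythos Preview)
Your skeleton matches the paper's exactly: run Algorithm~2, split into Case~1/Case~2, use $|S|\ge (k-1)+t(l-1)+a$, and lower-bound the number of iterations by grouping them according to the locality of the chosen pivot. The substantive difference is in how the per-locality iteration counts are bounded. The paper does not use cumulative constraints at all; it simply asserts that ``in these $l_j$ iterations, the rank of $S$ increases by $k_j$'' for each $j<r$ (and by $k_r-1$ or $k_r$ in the last block), which immediately gives $l_j\ge\lceil k_j/D_j\rceil$ \emph{individually} and, in Case~2, $k_r\le (l_r-1)D_r+1+(a+1)(r-1)$. Your early-absorption worry is exactly the point on which the paper's phrasing is loose, and your cumulative inequalities $\sum_{j'\le j} n_{j'}D_{j'}\ge K_j$ are the honest replacement.

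There is, however, a gap in your Case~2 step. You propose to prove two separate inequalities and add them: $\sum_{j<r} n_j\ge\sum_{j<r}\lceil k_j/D_j\rceil$ and $t(n_r-1)+a\ge\lceil(t(k_r-1)+1)/D_r\rceil-1$. The first follows from the cumulative constraints (your induction with $D_1=1$ and the ceiling identity is the right mechanism). The second does \emph{not} follow from the auxiliary constraint $k\le\sum_{j<r}n_jD_j+(n_r-1)D_r+1+(a+1)(r-1)$ by itself: if $\sum_{j<r}n_jD_j$ strictly exceeds $K_{r-1}=k-k_r$, the constraint only yields $(n_r-1)D_r\ge k_r-\epsilon-1-(a+1)(r-1)$ for some $\epsilon>0$, which is too weak. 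The paper avoids this precisely because its per-phase claim forces $\sum_{j<r}l_jD_j$ to absorb \emph{exactly} $K_{r-1}$, leaving the full $k_r$ for the last block. In your framework you cannot treat the blocks $j<r$ and $j=r$ independently; you must carry the slack $\epsilon$ into a single combined lower bound on $t\sum_{j<r}n_j + t(n_r-1)+a$ and show that any deficit in the $n_r$-term is offset by the corresponding excess in $\sum_{j<r}n_j$ (the same ceiling-identity mechanism, now applied across the $r$-th boundary, does this). Once that combined bound is established, the rest of your argument goes through verbatim.
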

\begin{IEEEproof}
We use Algorithm 2 with the code $\C$. Let $l$ be the number of iterations of Algorithm 2. Consider  two cases depending on how Algorithm 2 terminates.

\noindent\emph{Case 1}: $S_{l}$ is formed in line $5$. 

As in the proof of Theorem \ref{th1}, we get 
\begin{equation} \label{Sbound2}
|S| \geq k-1 + tl.
\end{equation} 
Let $l_j$ be the number of iterations in which coordinates with locality $j$ are chosen. In these $l_j$ iterations, the rank of $S$ increases by $k_j$ for $0 \leq j \leq r-1$. Since $\text{rank} \left( S_{l} \right) = k-1$ and coordinates with least locality are preferred in Step 3, for $j=r$, the rank increment for the $l_{r}$ iterations is $k_{r} -1 $. Now, as in the proof of Theorem \ref{th1},
\begin{align*}
k_j &\leq l_j \left( 1+t \left( j-1 \right) \right) \hspace{5mm} \forall j \in \left[ r-1 \right], \\
k_{r} -1 &\leq l_{r} \left( 1 + t \left( r-1 \right) \right).
\end{align*}  
Since $l = \sum_{j=1}^{r} l_{j}$, we have 
\begin{equation} \label{lbound2}
l \geq \sum\limits_{j=1}^{r-1} \left \lceil {\frac{k_j}{t \left( j-1 \right) +1}} \right \rceil + \left \lceil {\frac{k_r - 1}{t \left( r-1 \right) +1 }} \right \rceil.
\end{equation}
Plugging \eqref{lbound2} in \eqref{Sbound2}, we get
\begin{align*}
|S| &\geq k-1 + t \left( \sum\limits_{j=1}^{r-1} \left \lceil {\frac{k_j}{t \left( j-1 \right) +1}} \right \rceil + \left \lceil {\frac{k_r - 1}{t \left( r-1 \right) +1 }} \right \rceil \right) \nonumber \\
&\geq k-2 + t \left( \sum\limits_{j=1}^{r-1} \left \lceil {\frac{k_j}{t \left( j-1 \right) +1}} \right \rceil \right) + \left \lceil {\frac{t \left( k_r - 1\right) +1}{t \left( r-1 \right) +1 }} \right \rceil,  
\end{align*}
where the manipulations for the last step are same as before. Using $d\le n-|S|$, the distance bound follows.

\paragraph*{Case 2} $S_{l}$ is formed in line $9$. 

Note that \eqref{expS2} holds in this case. Since $\text{rank} \left( S_{l-1} \cup \Gamma_{a+1} \left( c_{l} \right) \right) = k$, in the last $l_r$ iterations, the rank increment is now $k_r$ instead of $k_r-1$ in Case 1 above. Lower bounds on $l_j$, $1 \leq j \leq r-1$ are the same as in Case 1. For $l_r$, we get a lower bound from the following inequality
\[ k_r \leq \left( l_r -1 \right) \left( 1+ t \left(r-1 \right) \right) + 1 + \left( a+1 \right) \left( r-1 \right). \]
Adding the lower bounds for $l_j$,
\begin{equation} \label{lbound2case2}
l-1 \geq \sum\limits_{j=1}^{r-1} \left \lceil {\frac{k_j}{t \left( j-1 \right) +1}} \right \rceil + \left \lceil {\frac{k_r-ar-r+a }{t \left( r-1 \right) +1 }} \right \rceil.	 
\end{equation}
Plugging \eqref{lbound2case2} in \eqref{expS2}, we have 
\begin{align} \label{Sbound2case2}
|S| &\geq k-1 + a+\nonumber \\ 
&\quad t \left( \sum\limits_{j=1}^{r-1} \left \lceil {\frac{k_j}{t \left( j-1 \right) +1}} \right \rceil + \left \lceil {\frac{k_r-ar-r+a}{t \left( r-1 \right) +1 }} \right \rceil \right). 
\end{align}
Let $\Omega=1+i\left(r-1 \right)$. Using $t \left \lceil x \right \rceil \geq \left \lceil tx \right \rceil $, 
\begin{align}
 t  \left \lceil {\frac{k_r-\Omega_{a+1}}{\Omega_{t} }} \right \rceil &\geq  \left \lceil {\frac{t \left(k_r-\Omega_{a+1} \right)}{\Omega_{t} }} \right \rceil \label{eq:19} \\
 &= \left \lceil {\frac{t \left( k_r - 1 \right)+1 -t \left( \Omega_{a+1}-1\right)-1}{\Omega_t}} \right \rceil \nonumber \\
 &= \left \lceil {\frac{t \left( k_r - 1 \right)+1}{\Omega_t} - \left(a+1 \right) + \frac{a}{\Omega_t}} \right \rceil. \label{eq:20} 
\end{align}
Substituting \eqref{eq:20} in \eqref{Sbound2case2}, and using $d \leq n - |S|$, we get the desired bound. 
\end{IEEEproof}

\section{Optimal LRCs with Availability}
In the previous sections, we had derived minimum distance upper bounds for locally recoverable codes (LRCs) with availability under different scenarios. In this section, we will consider examples of constructions of LRCs with availability and compare  their minimum distances with the derived distance upper bounds. In some cases, we obtain optimal constructions where the minimum distance meets the upper bound.

 \subsection{Regular recovery and locality with availability}
In this section, we revert to the notion of equal recovery and locality, and consider LRCs having $\left(r,t \right)$ locality with availability. 
\subsubsection{All-symbol locality}
The upper bound on minimum distance for LRCs with availability $t$ and all-symbol locality $r$ is given by \eqref{tbbound}. The tightness of this bound for arbitrary $t$ has not been fully settled. We consider a generalization of the polynomial-evaluation construction of LRC codes in Example 6 of \cite{tamo2014family} and show optimality for some specific cases by computational methods. Later, we prove optimality for the case of $r=k-1$ and arbitrary $t$. 

For the sake of clarity and completeness, we briefly outline Example 6 of \cite{tamo2014family} below. 
\begin{example}\label{example_text}
An $\left( n=16,k,r=3,t=2 \right)$ LRC is constructed over $\mathbb{F}_{16}$ by generating orthogonal partitions from cosets of two copies of $\mathbb{F}^{+}_{4}$ denoted $H_{1}= \left \lbrace 0,1,\alpha,\alpha^{4} \right \rbrace $, $H_{2} = \left \lbrace 0,\alpha^{2},\alpha^{3},\alpha^{6} \right \rbrace $, where $\alpha$ is the residue class of $x$ modulo $x^{4} + x + 1$. The annihilator polynomials of $H_{1}$ and $H_{2}$, denoted $g_1$ and $g_2$, respectively, are
\begin{align*}
g_{1}(x) = x^{4} + \alpha^{10}x^{2} + \alpha^{5}x, \\
g_{2}(x) = x^{4} + \alpha^{14}x^{2} + \alpha^{11}x.
\end{align*} 
The orthogonal partitions that are generated by $H_1$, $H_2$ and their cosets are 
\begin{align*}
 	\mathcal{A}_{1} = \left \lbrace \left \lbrace 0,1,\alpha,\alpha^{4} \right \rbrace , \left \lbrace \alpha^{2},\alpha^{8},\alpha^{5},\alpha^{10} \right \rbrace , \left \lbrace \alpha^{3},\alpha^{14},\alpha^{9},\alpha^{7} \right \rbrace, \right. \\
\left \lbrace \alpha^{6},\alpha^{13},\alpha^{11},\alpha^{12} \right \rbrace  \rbrace, \\
\mathcal{A}_{2} = \left \lbrace \left \lbrace 0,\alpha^{2},\alpha^{3},\alpha^{6} \right \rbrace ,\left \lbrace 1,\alpha^{8},\alpha^{14},\alpha^{13} \right \rbrace  ,\left \lbrace \alpha,\alpha^{5},\alpha^{9},\alpha^{11} \right \rbrace , \right. \\\left \lbrace \alpha^{4},\alpha^{10},\alpha^{7},\alpha^{12} \right \rbrace \rbrace.
\end{align*}
The basis of $\mathcal{F}_{\mathcal{A}_{1}}^{3} \bigcap \mathcal{F}_{\mathcal{A}_{2}}^{3}$ is obtained by choosing polynomials of distinct degrees that can be expressed as a linear combination of the basis of both $\mathcal{F}_{\mathcal{A}_{1}}^{3}$ and $\mathcal{F}_{\mathcal{A}_{2}}^{3}$. We find that the basis of $\mathcal{F}_{\mathcal{A}_{1}}^{3} \bigcap \mathcal{F}_{\mathcal{A}_{2}}^{3}$ comprises of  polynomials of degrees $0,1,2,4,6,8,9,10,12$. Table \ref{table_example} summarizes the possible dimensions along with two distance lower bounds: (a) $n- \max_{f_a \in V_{m}} \text{deg} \left( f_{a} \right)$, (b) $n - \max\limits_{f_a \in V_{m}} \text{deg} \left( \gcd \left( f_a,x^{16}-x \right) \right)$. The second lower bound is evaluated computationally. The distance upper bound in \eqref{tbbound} is also shown. The second lower bound is tighter than the first for $k=6,7$. For $k=7$, the second lower bound meets the upper bound, thus giving an optimal code. For $k=8,9$, our computations for the second lower bound did not terminate. 
\begin{table}[htb]
\renewcommand{\arraystretch}{1.5}
\setlength{\tabcolsep}{20pt}
\caption{Lower and upper bound on minimum distance for Example \ref{example_text}. }
\label{table_example}
    \centering
    $\begin{array}{| C | C | C | C |}
        \hline
        $k$ & LB 1 & LB 2 & UB \\
        \hline
         4 & 12 &12& 12\\ \hline
         5 & 10 &10& 11 \\ \hline
         6 & 8 &9& 10 \\ \hline
         7 & 7 &8& 8 \\ \hline
         8 & 6 &$-$& 7 \\ \hline
         9 & 4 &$-$& 6 \\ \hline         
    \end{array}$
\end{table}
\end{example}
We generalize the construction in the above example to arbitrary $t$ and show that it is optimal for $k=r+1$. 	
\begin{construction}\label{const}
Let $r+1 = p^{l}$, $p$: prime, $l \geq 1$, $t \geq 2$. Let $n = (r+1)^{t},k=r+1$ and $A = \mathbb{F}_{(r+1)^{t}}$. The additive subgroup of $\mathbb{F}_{(r+1)^{t}}$ can be written as
$$ \mathbb{F}^{+}_{(r+1)^{t}} \cong \{[a_1,\ldots,a_t]:a_i\in\mathbb{F}^{+}_{(r+1)}\}. $$ 
Consider $t$ subgroups $H_i=\{[0,\ldots,0,a_i,0,\ldots,0]:a_i\in\mathbb{F}^{+}_{(r+1)}\}$ of $ \mathbb{F}^{+}_{(r+1)^{t}}$ of size $r+1$ for $i\in[t]$. Let $g_{i}$ be the annihilator polynomial of $H_i$. Let $\mathcal{A}_{i}$ be the partitions of $A$ induced by the cosets of $H_i$. We have
$$ \mathbb{F}_{\mathcal{A}_{i}}[x] = \langle 1,g_{i}(x), g_{i}(x)^{2} \ldots g_{i}(x)^{\frac{n}{r+1} -1} \rangle.$$
 Since $\bigcap\limits_{i=1}^{t} H_{i} = \left \lbrace 0 \right \rbrace $, $ \left \lbrace \mathcal{A}_{i} \right \rbrace $ are orthogonal partitions.

Now, a crucial observation is the following. Since $H_{i}$ is a copy of $\mathbb{F}_{r+1}^{+}$, we have $ \sum\limits_{h \in H_{i}} h = 0$. 
It follows that the coefficient of $x^{r}$ in $g_{i}(x)=\prod_{h\in H_{i}}(x-h)$ is $0$ $\forall i \in [t]$. Therefore, $g_{i}(x)$ is of degree $r+1$ and is contained in  $\bigcap\limits_{i=1}^{t} \mathcal{F}_{\mathcal{A}_{i}}^{r}$ as degrees $1$ to $r-1$ are contained in each $\mathcal{F}_{\mathcal{A}_{i}}^{r}$. Using this, we see that
$$V_{r+1} = \bigcap_{i=1}^{t} \mathcal{F}_{\mathcal{A}_{i}}^{r} \bigcap P_{r+1} = \langle 1,x \ldots x^{r-1}, g_1(x) \rangle.$$
To encode a message $a \in \mathbb{F}_{(r+1)^{t}}^{r+1}$, we define the encoding polynomial 
$$f_{a}(x) = \sum_{i=0}^{r-1} a_{i}x^{i} + a_{r+1}g_1(x).$$
The code is obtained by evaluating $f_{a}$ on the points of $A$.
\end{construction}

\begin{theorem}
The $\left( (r+1)^{t}, r+1, r,t \right)$ LRC code from Construction \ref{const} is optimal.
\end{theorem}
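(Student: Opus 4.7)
The plan is to show the lower bound on $d$ from the construction matches the upper bound \eqref{tbbound} specialized to $k=r+1$. First I would evaluate the right hand side of \eqref{tbbound} for $k-1=r$: the summand $\lfloor r/r^i\rfloor$ equals $r$ when $i=0$, equals $1$ when $i=1$, and is $0$ for every $i\geq 2$, so the bound collapses to
\[ d \le n-(r+1) = (r+1)^t - r - 1. \]
This is the target value that Construction \ref{const} must attain.

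Next I would verify that Construction \ref{const} produces a code of dimension exactly $k=r+1$ and all-symbol locality $(r,t)$. The dimension follows from checking that the $r+1$ polynomials $1,x,\ldots,x^{r-1},g_1(x)$ in $V_{r+1}$ are linearly independent and that the evaluation map on $A=\mathbb{F}_{(r+1)^t}$ is injective on this space; injectivity holds because any nonzero element of $V_{r+1}$ has degree at most $r+1<n$. The locality/availability structure is inherited from the Tamo-Barg framework: membership of $f_a$ in $\mathcal{F}_{\mathcal{A}_i}^{r}$ forces $f_a$ restricted to any coset $A_i\in\mathcal{A}_i$ to agree with a polynomial of degree at most $r-1$ in $x$, so any $r$ of the $r+1$ points in such a coset determine the remaining one. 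The orthogonality $H_i\cap H_j=\{0\}$ for $i\neq j$ ensures the $t$ recovering sets for each coordinate are disjoint.

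For the distance lower bound, I would use the elementary fact that a nonzero polynomial of degree at most $r+1$ over the field $\mathbb{F}_{(r+1)^t}$ has at most $r+1$ roots, hence any nonzero codeword has at most $r+1$ zeros among the $n$ evaluation points. This immediately gives $d\ge n-(r+1)$. Combined with the upper bound computation above, $d = (r+1)^t - r - 1$, which is precisely the value of \eqref{tbbound}, proving optimality.

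The main subtlety — which is really the heart of Construction \ref{const} rather than the theorem itself — is the observation already recorded: because $H_i$ is a copy of $\mathbb{F}_{r+1}^+$, the sum of its elements vanishes, so the annihilator polynomial $g_i(x)$ has no $x^r$ term and hence sits in degree $r+1$ rather than contributing an unwanted degree-$r$ monomial. Without this vanishing, $g_1(x)$ would have degree $r+1$ with a nonzero $x^r$ coefficient and still lie in the intersection, but the clean statement that $V_{r+1}=\langle 1,x,\ldots,x^{r-1},g_1(x)\rangle$ hits the degree budget exactly at $r+1$ is what lets the polynomial-zero bound close the gap with \eqref{tbbound}. Once this is in hand the theorem follows from combining the two matching bounds, so no additional combinatorial argument on the recovering graph is needed.
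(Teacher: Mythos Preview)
Your proposal is correct and follows essentially the same approach as the paper: you match the polynomial-degree lower bound $d\ge n-(r+1)$ against the specialization of \eqref{tbbound} at $k=r+1$, which collapses to $d\le n-(r+1)$. The additional verification you include (dimension, locality/availability, the $x^r$-coefficient observation) is already absorbed into Construction~\ref{const} in the paper, so the paper's proof is terser, but the substance is identical.
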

\begin{proof}
 Since $r+1$ is the maximum degree of the encoding polynomials, we have $d \geq n -  (r+1)$. By the bound \eqref{tbbound},
$$d \leq n - \left(k-1 + \sum\limits_{i=1}^{t} \left \lfloor{\frac{k-1}{r^{i}}}\right \rfloor \right) = n - (r+1),$$ 
and the proof is complete.
\end{proof}
\subsubsection{Information locality}
An upper bound on minimum distance for LRCs with information locality $r$ and availability $t$ is given by \eqref{infobd}. We extend the parity-check matrix construction of \cite{DBLP:journals/corr/HaoX16} to include availability, and show that it achieves the distance upper bound for $n \geq k \left( tr+1 \right)$. Let 
\begin{equation}
  \label{eq:8}
  \Gamma = n-k+1 - \left \lceil \frac{t\left( k-1 \right) +1 }{t\left( r-1 \right) +1 } \right \rceil.
\end{equation}
If any $\Gamma$ columns of a parity-check matrix are linearly independent, the corresponding code meets the minimum distance upper bound in \eqref{infobd}.
\begin{construction}
Let $tr+1 | n$ and $tr+1 \nmid \Gamma$. Let $v = n/(tr+1)$, $u = n-k-vt$. 

\noindent\emph{Local parity checks}: Define the $t\times(1+tr)$ matrix
\[ 
\sbox1{$\begin{matrix}
1 \\ 1 \\ \vdots \\ 1
\end{matrix}$}
H_1 = \left[ \begin{array}{c|c}
\usebox{1} & I_t \otimes \underbrace{\left( 1 1 \dots 1 \right)}_\text{r} \end{array} \right],
 \] 
where $\otimes$ denotes matrix tensor product. The local parity checks are constructed as
a $vt\times n$ matrix
$$H_{\text{local}}=\begin{bmatrix}
H_1    &0        &\cdots &0\\
0       &H_1     &\cdots &0\\
\vdots&\vdots&\ddots&\vdots\\
0        &0       &\cdots&H_1
\end{bmatrix}.$$
The $v$ coordinates numbered 1, $2+tr$, $3+2tr$, $\ldots$, $v+(v-1)tr$ have locality $r$ with availability $t$, and are referred to as \emph{availability} columns. The nonzero coordinates in every row of $H_{\text{local}}$ is called a repair group.

\noindent\emph{Global parity checks}: For a vector $\bm{\a}=[\a_1\ \a_2\ \cdots\ \a_r]$ with $\a_i\in\F_{q^m}$, define the $u\times r$ matrix $M(\bm{\a})$ as follows:
$$M(\bm{\a})=\begin{bmatrix*}[l]
\a_1&\a_2&\cdots&\a_r\\
\a^q_1&\a^q_2&\cdots&\a^q_r\\
\vdots&\vdots&\vdots&\vdots\\
\a^{q^{u-1}}_1&\a^{q^{u-1}}_2&\cdots&\a^{q^{u-1}}_r
\end{bmatrix*}.$$
The $b$-th column of $M(\bm{\a})$ is denoted $c(\a_b)$. Let $\a_{i,j,h}\in\F_{q^m}$ for $1\le i\le v$, $1\le j\le t$, $0\le h\le r$, and define the vector $\bm{\a}_{i,j}=[\a_{i,j,1}\ \a_{i,j,2}\ \cdots\ \a_{i,j,r}]$. Define the $u\times tr$ matrix $M_i=[M(\bm{\a}_{i,1})\ M(\bm{\a}_{i,2})\ \cdots\ M(\bm{\a}_{i,t})]$. The global parity checks are constructed as an $u\times n$ matrix
\begin{align*}
H_{\text{global}}=[c(\a_{1,1,0})\,M_1\ \ c(\a_{2,1,0})\,M_2\ \cdots\ c(\a_{v,1,0})\,M_v].
\end{align*}
Further, we require that $m \geq v(t( r-1) +1)$ and that 
$$\left\{\alpha_{i,1,0}-\sum\limits_{l=1}^{t} \alpha_{i,l,r},\ \alpha_{i,j,h}-\alpha_{i,j,r}\right\},$$ 
 for $\ 1\le i\le v$, $1\le j\le t$, $1\le h\le r-1$, are linearly independent over $\mathbb{F}_q$. Finally, the overall parity-check matrix is defined as $H=\begin{bmatrix*}[l]H_{\text{local}}\\H_{\text{global}}\end{bmatrix*}$. 
\end{construction} 
 \begin{theorem}
 For $n \ge k \left(tr+1 \right)$, the linear code obtained using Construction 2 is a $q^m$-ary $(n,k)$ LRC with information locality $r$ and availability $t$. The code meets the distance upper bound \eqref{infobd}.
 \end{theorem}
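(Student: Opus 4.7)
The proof naturally splits into two parts: verifying the claimed locality/availability, and verifying that the distance meets the upper bound \eqref{infobd}.

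For the locality, I would simply read off the structure of $H_{\text{local}}$. The block-diagonal form places an independent copy of $H_1$ on each of the $v = n/(tr+1)$ index blocks of size $tr+1$. Within one block, the $j$-th row of $H_1$ has a $1$ in the availability position and $1$'s in exactly the $r$ coordinates of the $j$-th recovery group, so this row exhibits the availability column as a sum of those $r$ coordinates. As $j$ ranges over $[t]$, we obtain $t$ pairwise disjoint recovery sets of size $r$ for every availability column. Since $n \ge k(tr+1)$ gives $v \ge k$, we may designate $k$ of the $v$ availability columns as information positions, so the code has information locality $r$ with availability $t$.

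For the distance, it suffices to show that no nonzero codeword has Hamming weight at most $\Gamma$. I plan to reformulate the code in terms of the ``free'' variables surviving the local parity constraints, then apply a Moore/Gabidulin argument to the resulting global system. In block $i$, solving the $t$ local equations yields $c_{(i,j,r)} = -\xi_{i,0} - \sum_{h=1}^{r-1} \xi_{i,j,h}$, where $\xi_{i,0} := c_{p_i}$ and $\xi_{i,j,h} := c_{(i,j,h)}$ for $j \in [t]$, $h \in [r-1]$ are the $t(r-1)+1$ free coordinates of the block. Substituting these into $H_{\text{global}} c^T = 0$ and using the $\F_q$-additivity of Frobenius columns $c(\alpha)+c(\beta)=c(\alpha+\beta)$, the global system collapses to
\[
\sum_{i=1}^{v} \xi_{i,0}\, c(\beta_{i,0}) + \sum_{i,j,h<r} \xi_{i,j,h}\, c(\beta_{i,j,h}) = 0,
\]
where $\beta_{i,0} = \alpha_{i,1,0} - \sum_l \alpha_{i,l,r}$ and $\beta_{i,j,h} = \alpha_{i,j,h} - \alpha_{i,j,r}$. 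By hypothesis these $v(t(r-1)+1)$ scalars are $\F_q$-linearly independent in $\F_{q^m}$ (made possible by $m \ge v(t(r-1)+1)$), so the coefficient matrix is a Moore matrix whose every $u$-subset of columns is $\F_{q^m}$-linearly independent. Hence any nonzero codeword has at least $u+1$ nonzero free coordinates.

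The main obstacle, in my view, is converting this lower bound on the count of active free coordinates into a Hamming-weight lower bound matching $\Gamma + 1$. I would do this by a block-wise accounting of how many of the $t$ dependent coordinates $c_{(i,j,r)}$ can simultaneously be annihilated in each block. For a block $i$ with $n_i$ active free coordinates, the contribution to the weight is at least $\max(t+1, n_i)$ when $\xi_{i,0} \ne 0$, because at most $\min(t, n_i-1)$ of the dependent coordinates can be made zero by spreading the active $\xi_{i,j,h}$'s across distinct groups; when $\xi_{i,0}=0$ the contribution is at least $\max(2,n_i)$, since an isolated active $\xi_{i,j,h}$ forces the corresponding $c_{(i,j,r)}$ to be nonzero. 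A subtle point is that the values of the free variables are not freely chosen but constrained by the Moore-matrix relation, and one must verify that the ``optimistic'' per-block minima remain valid when aggregated across blocks that share a global dependence. Carrying out this constrained optimization subject to $\sum_i n_i \ge u+1$, and then applying ceiling manipulations of the kind already used in the proof of Theorem \ref{th1}, should recover the factor $\lceil(t(k-1)+1)/(t(r-1)+1)\rceil$. After substituting $u = n-k-vt$ and $v = n/(tr+1)$ the minimum weight equals $\Gamma+1$, matching \eqref{infobd} and establishing optimality.
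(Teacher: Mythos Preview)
Your reduction to the ``free'' coordinates and the Moore-matrix argument giving at least $u+1$ nonzero free coordinates is correct, and it mirrors the linear-algebraic core of the paper's proof. The gap is in the last step: the per-block lower bounds you state are too weak to push $\sum_i n_i\ge u+1$ up to a Hamming weight of $\Gamma+1$. Concretely, take $t=2$, $r=3$, $v=2$, $k=2$ (so $n=14$, $u=8$, $\Gamma=12$). A minimum-weight free codeword has $9$ nonzero free coordinates; you may place $5$ in block~1 (with $\xi_{1,0}\ne 0$) and $4$ in block~2 (with $\xi_{2,0}=0$). Your bounds give $w_1\ge\max(3,5)=5$ and $w_2\ge\max(2,4)=4$, hence total weight $\ge 9$, far short of the required $13$. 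More generally, an adversary who fills blocks to their free-coordinate capacity $t(r-1)+1$ attains ratio $w_i/n_i=1$ in your bounds, so the optimization collapses to roughly $u+1$, not $\Gamma+1$. The ``subtle point'' you flag is actually not the issue (your per-block inequalities hold for \emph{any} values, so aggregating them is legitimate); the issue is that those inequalities simply discard too much.

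The paper sidesteps this by working on the parity-check side: it picks an arbitrary set of $\Gamma$ columns of $H$ and shows they are linearly independent. The argument has two pieces. First, a counting step shows that the number $\Delta$ of rows of $H$ that are nonzero on the chosen columns satisfies $\Delta\ge\Gamma$; this is done by a case split on how many availability columns are among the $\Gamma$ (essentially the same ceiling arithmetic you allude to, but applied to rows of $H$ rather than to codeword weights). Second, within each repair group one subtracts the last chosen column from the others; this kills the local rows down to a permutation block and turns the global rows into columns $c(\beta)$ with the $\beta$'s drawn from the $\F_q$-independent family $\{\alpha_{i,1,0}-\sum_l\alpha_{i,l,r},\ \alpha_{i,j,h}-\alpha_{i,j,r}\}$. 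After deleting redundant rows one is left with a square matrix of the form $\bigl[\begin{smallmatrix}I&0\\ *&M(\bm{\gamma})\end{smallmatrix}\bigr]$ whose Moore block is nonsingular. The row-count step is exactly what your block-by-block weight accounting fails to reproduce: it uses the interaction between the $tr+1$ columns of a block and the $t$ local rows directly, rather than bounding dependent coordinates one group at a time.
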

 \begin{proof} 
Since $k\le v$, we choose the $k$ information symbols among the availability columns, and ensure that the code has information locality $r$ with availability $t$. The rank condition for the parity part of the matrix $H$ will be proved later.

Choose $\Gamma$ columns arbitrarily from $H$. Let $\Delta$ be the number of non-zero rows among the chosen $\Gamma$ columns. We will first show that $\Delta\ge\Gamma$. 

The $u$ rows of $H_{\text{global}}$ are nonzero. Let $x\le v$ be the number of availability columns chosen among the $\Gamma$ columns. This results in $xt$ distinct nonzero rows in $H_{\text{local}}$. Let $x_0=\Gamma/(tr+1)$. Additional number of rows among the chosen non-availability columns depends on the following two cases.  

\noindent\textbf{Case 1}: $x\ge \lceil x_0 \rceil$.

\noindent In this case, there may be no rows obtained from non-availability columns, and we have the lower bound
\begin{equation} \label{orig}
\Delta\ge \Delta_{\text{LB}}\triangleq u + xt\ge u+t \lceil x_0\rceil. 
\end{equation}
To relate $\Delta$ to $\Gamma$, we start with the following observation:
$$\frac{t(k-1)}{t(r-1)+1}<\left \lceil \frac{t(k-1) +1 }{t( r-1) +1 } \right \rceil\le \frac{kt}{t(r-1)+1}+1.$$
Using the above in \eqref{eq:8} and simplifying, we obtain the following bounds on $\Gamma$:
\begin{align}
  \label{eq:9}
  u+tx_0\le \Gamma< u+tx_0+1.
\end{align}
Therefore, if $t \lceil x_0 \rceil \geq tx_0+1$, we clearly have $\Delta_{\text{LB}}\ge \Gamma$. On the other hand, if $t\lceil x_0 \rceil < tx_0+1$, we have
\begin{equation}
  \label{eq:10}
  u+tx_0\le \Delta_{\text{LB}}< u+tx_0+1.
\end{equation}
So, from \eqref{eq:9} and \eqref{eq:10}, both $\Delta_{\text{LB}}$ and $\Gamma$ are equal to $u+\lceil tx_0\rceil$, which is the unique integer lying in the interval $[u+tx_0,u+tx_0+1)$. So, we have $\Delta_{\text{LB}}=\Gamma$.

\noindent\textbf{Case 2}: $0 \leq x \leq \lfloor x_0 \rfloor$.

\noindent In this case, at least $\left\lceil\dfrac{\Gamma-x(1+tr)}{r}\right\rceil$ additional nonzero rows will be included among the nonavailability columns. So, we get the following lower bound on $\Delta$:
\begin{align}
  \Delta&\ge u+xt+\left\lceil\frac{\Gamma-x(1+tr)}{r}\right\rceil\nonumber\\
&\overset{(a)}{=}u+\left\lceil\frac{\Gamma-x}{r}\right\rceil\overset{(b)}{\ge}u+\left\lceil\frac{\Gamma-x_0}{r}\right\rceil\nonumber\\
&\overset{(c)}=u+\lceil tx_0 \rceil\overset{(d)}{=}\Gamma,
  \label{eq:11}
\end{align}
where $(a)$ was obtained by moving the integer $xt$ inside the ceil, $(b)$ results because $x\le x_0$, $(c)$ is obtained by plugging in the expression for $x_0$, and $(d)$ results from \eqref{eq:9}. This concludes the proof that $\Delta\ge\Gamma$.

The rest of the proof is to show that the submatrix $H_s$ made of the chosen $\Gamma$ columns has full column rank. This is similar to the proof in \cite{DBLP:journals/corr/HaoX16}, and we provide a brief outline pointing out the main differences to account for the availability columns. Consider a repair group $\{i_0,i_1,\ldots,i_r\}$, where $i_0$ is the availability column. Let $S=\{j:i_j\text{ is chosen}\}$ be the set of chosen columns in the repair group. If $|S|>1$, for every $j\in S$, $j<\max(S)$, set column $i_j$ as the difference of column $i_j$ and column $i_{\max(S)}$. Let $H'_s$ denote the modified $H_s$ after the above operations on all repair groups.

We now reduce $H'_s$ to a square matrix by deleting rows. First consider the locality part of $H'_s$. Delete the all-zero rows. An availability column will have a 1 if no non-availability column is chosen in any of its $t$ repair groups. If there are multiple 1s in an availability column, retain only one such row and delete the others. Finally, in the global part of $H'_s$, delete rows from the bottom to obtain a square matrix and denote it $H''_s$. Now, in the locality part of $H''_s$, every row has a single 1, and every column is either all-zero or has a single 1. So, we get the following structure for a suitably column-permuted $H''_s$:
\begin{equation}
  \label{eq:12}
  \left[\begin{array}{c|c}
\text{Identity}&\text{All-zero}\\
\hline
H_1&H_2
\end{array}\right],
\end{equation}
where $H_1$ and $H_2$ are the global part. Denoting the first row of the square matrix $H_2$ by $\bm{\g}=[\g_1,\g_2,\ldots,\g_l]$, we see that $H_2=M(\bm{\g})$ with $u=l$, and that the $\g_i$ are from the set $\left \lbrace \alpha_{i,1,0}-\sum\limits_{l=1}^{t} \alpha_{i,l,r_l}, \alpha_{i,j,h}-\alpha_{i,j,r_j} \mid h \in [r_j-1] \right \rbrace $, where $i$ and $j$ depend on the arbitrary chosen $\Gamma$ columns. By our construction and a proof similar to that of Lemma 4 of \cite{DBLP:journals/corr/HaoX16}, the elements of $\bm{\g}$ are  linearly independent over $\mathbb{F}_q$. As a result, the determinant of $H_2$ is nonzero implying that the originally chosen arbitrary $\Gamma$ columns are linearly independent.

The proof for the rank of the parity part of $H$ is similar to the above and we skip the details.
 \end{proof}	
We note that Construction 2 gives an explicit construction unlike  \cite{wang2014repair}. 
 Finally, we remark that Construction 2 can easily be extended to construct codes with unequal information locality and availability as defined in Section~\ref{section:3}.     
 
\subsection{Irregular recovery with availability}
We consider two examples of LRCs having irregular recovery with availability. The first example is the same as Example 5 of \cite{tamo2014family}. The second example is a similar construction applied to a larger field. We compare their distances with the upper bound of Theorem \ref{unifirreg}. We show the second example is optimal by showing that the distance satisfies the upper bound of Theorem \ref{unifirreg} with equality.
\begin{example}
An $\left(n=12,k=4,r_1=3,r_2=2,t=2 \right)$ LRC is constructed over $\mathbb{F}_{13}$ by generating orthogonal partitions, $\mathcal{A}$ and $\mathcal{A}'$, from the cosets of the multiplicative subgroups generated by $5$ and $3$, respectively. The partitions are as follows:
\[ \mathcal{A} = \left \lbrace \left \lbrace 1,5,12,8 \right \rbrace , \left \lbrace 2,10,11,3 \right \rbrace \left \lbrace 4,7,9,6 \right \rbrace \right \rbrace, \]
 \[ \mathcal{A}' = \left \lbrace \left \lbrace 1,3,9 \right \rbrace , \left \lbrace 2,6,5 \right \rbrace , \left \lbrace 4,12,10 \right \rbrace , \left \lbrace 7,8,11 \right \rbrace \right \rbrace. \]
Since the constant polynomials with respect to $\mathcal{A}$ and $\mathcal{A}'$ are $x^4$ and $x^3$, respectively, (see Section \ref{sec:polyn-eval-constr}), we have 
\[ \mathbb{F}_{\mathcal{A}} \left[ x \right] = \langle 1,x^4,x^8 \rangle, \quad \mathbb{F}_{\mathcal{A}'} \left[ x \right] = \langle 1,x^3,x^6,x^9 \rangle. \]
Finding the basis of $\mathcal{F}_{\mathcal{A}}^{3} \bigcap \mathcal{F}_{\mathcal{A}'}^{2}$ and truncating it to obtain a $k=4$ dimensional subspace, we have
\[ V_6 = \langle 1, x, x^4, x^6 \rangle. \]
Since $\max\limits_{f_a \in V_{6}} \text{deg} \left( f_{a} \right)$ is $6$, this code has distance $d \geq 6$. Evaluating the upper bound on distance from Theorem \ref{unifirreg}, we have $d \leq 8$, which differs from the upper bound.
\end{example} 

\begin{example} 
An $\left(n=32,k=8,r_1=7,r_2=3,t=2 \right)$ LRC is constructed over $\mathbb{F}_{32}$ by generating orthogonal partitions, $\mathcal{A}$ and $\mathcal{A}'$, from cosets of copies of $\mathbb{F}_{8}^{+}$ and $\mathbb{F}_{4}^{+}$ denoted $H$ and $H'$, respectively. Let $\alpha\in\mathbb{F}_{32}$ be primitive satisfying $\alpha^5+\alpha^2+1=0$. We have
\[ H = \left( 0,1,\alpha,\alpha^2,\alpha^5,\alpha^{11},\alpha^{18},\alpha^{19} \right), \quad H' = \left( 0,\alpha^3,\alpha^4,\alpha^{21} \right). \]
$\mathcal{F}_{\mathcal{A}}^{7}$, $\mathcal{F}_{\mathcal{A}'}^{3}$  and annihilator polynomials of $H$, $H'$ are obtained as defined in Section \ref{sec:polyn-eval-constr}. By linear algebraic techniques, we can find the dimension and basis of  $V_m=\mathcal{F}_{\mathcal{A}}^{7} \bigcap \mathcal{F}_{\mathcal{A}'}^{3}\cup P_{m}$ numerically. It can be verified numerically that for $m=9$, we have the dimension $k=8$. Thus, evaluating the distance upper bound of Theorem \ref{unifirreg}, we get $d \leq 23$. Since $\max\limits_{f_a \in V_{9}} \text{deg} \left( f_{a} \right)$ is $9$, we have $d \geq 23$. Thus, the bound of Theorem \ref{unifirreg} is met with equality. 
\end{example} 

\section{Conclusion}
We derived new upper bounds on minimum distance for codes with unequal all-symbol locality and availability. We presented a generalization of a construction for LRCs with availability, that attains the upper bound on minimum distance for arbitrary $t$ and $r=k-1$. An explicit parity-check matrix construction that meets the upper bound on minimum distance for LRCs with information locality and availability was also obtained for $n \geq k \left(tr+1 \right)$. Future work includes finding optimal constructions for LRCs with availability for higher values of $k$ (or lower values of $r$) and finding constructions that meet the bounds proposed in this paper for a larger range of parameter values. 

\bibliographystyle{IEEEtran}
\bibliography{IEEEabrv,ref}

\end{document}